\newtheorem{thm}{Theorem}
\newtheorem{lemma}{Lemma}
\begin{document}

\begin{frontmatter}
\title{Rate-distortion functions of non-stationary Markoff chains and their block-independent approximations}

\begin{aug}
\author{\fnms{Mukul} \snm{Agarwal}
\ead[label=e1]{magar@alum.mit.edu}}
\address{Department of Electrical and Computer Engineering\\
Boston University\\
\printead{e1}}


\affiliation{Department of Electrical and Computer Engineering, Boston University}

\end{aug}

\begin{abstract}
It is proved that the limit of the normalized rate-distortion functions of block independent approximations of an irreducible, aperiodic Markoff chain is independent of the initial distribution of the Markoff chain and thus, is also equal to the rate-distortion function of the Markoff chain. 
\end{abstract}

\begin{keyword}[class=AMS]
\kwd[Primary ]{00K00 {\color{cyan} *}}
\kwd{00K01}
\kwd[; secondary ]{00K02 {\color{cyan}  *}}
\end{keyword}

\begin{keyword}
\kwd{\color{red}* }
\end{keyword}
\end{frontmatter}

\addtolength{\parskip}{0.5\baselineskip}
\setlength{\parindent}{0cm}

\section{Introduction}
Consider a random source which evolves on a finite set. It follows from existing  literature, see for example \cite{Shannon} and \cite{GallagerInformationTheory} (Pages 491-500, in particular, see Definition (9.8.3) and Theorem 9.8.3 for achievability), that the limit of the normalized rate-distortion functions of block-independent approximations of a a stationary, ergodic source is equal to the rate-distortion function of the source. Specializing this theorem to irreducible, aperiodic Markoff chains, it follows that the limit of rate-distortion functions of  block-independent approximations of an irreducible, aperiodic Markoff chain which starts in the stationary distribution is equal to the rate-distortion function of this Markoff chain. It is known that the rate-distortion function of an irreducible, aperiodic Markoff chain is independent of its initial distribution (follows from  \cite{Gray}). In this paper, it will be proved that the limit of the normalized rate-distortion functions of  block-independent approximations of an irreducible, aperiodic Markoff chain is independent of its initial distribution. It follows, then, that the rate-distortion function of an irreducible, aperiodic Markoff chain and the limit of the normalized  rate-distortion functions of its block independent approximations are equal and these functions are independent of the initial distribution of the Markoff chain.

Literature on rate-distortion theory is vast. The seminal works are \cite{Shannon} and \cite{ShannonReliable}. 
A work for rate-distortion theory for random processes is \cite{Kolmogorov}. Much of the  classical point-to-point literature on rate distortion theory gets subsumed  under the books \cite{GallagerInformationTheory} and  \cite{Gray}. Another reference is \cite{Berger}. The reader is refered to these three books and  references therein for the literature on rate-distortion theory. In particular, the reader is referred to \cite{Gray} because non-stationary sources are dealt with in great detail in this book, and the concern here is with a non-stationary process, albeit, a non-stationary Markoff chain. For understanding Markoff chains, the reader is referred to \cite{Gnedenko}, \cite{Shiryaev}, and \cite{Feller1}.

\section{Notation and definitions} \label{NotAndDef}

$\mathbb X$ and $\mathbb Y$ denote the source input and source reproduction spaces respectively. Both are assumed to be finite sets. Asume that $\mathbb X = \mathbb Y$. Assume that the cardinality of $\mathbb X$ is greater than or equal to $2$.
$d: \mathbb X \times \mathbb Y \rightarrow [0, \infty )$ is the single-letter distortion measure. Assume that $d(x,x) = 0 \ \forall x \in \mathbb X$ and that $d(x,y) > 0$ if $x \neq y$. 
 Denote 
\begin{align}
D_{\max} \triangleq \max_{x \in \mathbb X, y \in \mathbb Y}d(x,y), 
D_{\min} \triangleq \min _{\{x \in \mathbb X, y \in \mathbb Y | d(x,y) > 0\}} d(x,y)
\end{align}
In what follows, the distortion levels will be assumed to be strictly greater than $0$.
For $x^n \in \mathbb X^n, y^n \in \mathbb Y^n$, the $n$-letter rate-distortion measure is defined additively:
\begin{align}
d^n(x^n, y^n) \triangleq \sum_{i=1}^n d(x^n(i), y^n(i))
\end{align}
where $x^n(i)$ denotes the $i^{th}$ component of $x^n$ and likewise for $y^n$.

Let $X_1, X_2, \ldots$, be a Markoff chain with transition probability matrix $P$, where each $X_i$ is a random-variable on $\mathbb X$.  For $x, x' \in \mathbb X$, $p_{xx'}$ denotes the probability that the Markoff chain is in state $x'$ at time $t+1$ given that it is in state $x$ at time $t$. $p_{xx'}$ is independent of $t$. Assume that the Markoff chain is irreducible, aperiodic. This implies that it has a stationary distribution, henceforth denoted by $\pi$, which will be reserved exclusively for the stationary distribution. In order to specify the Markoff chain completely, we need to specify its initial distribution. If $X_1 \sim \pi'$ denote the Markoff chain $(X_1, X_2, \ldots)$ by $X_{[\pi', P]}$. Recall that $P$ is the transition probability matrix of the Markoff chain. $X_{[\pi',P]}$ will be called the Markoff $X_{[\pi',P]}$ chain. $X^n_{[\pi', P]}$ will denote $(X_1, X_2, \ldots, X_n)$.

The above mentioned assumptions that $\mathbb X = \mathbb Y$, $d(x,x) = 0$ and $d(x,y)> 0$ is $x \neq y$,  that the distortion levels are strictly greater than zero, and that, the Markoff chain is irreducible, aperiodic, will be made throughtout this paper and will not be re-stated.

A rate $R$ source-code is a sequence:

$<e^n, f^n>_1^\infty$, where $e^n: \mathbb X^n \rightarrow \{1, 2, \ldots, 2^{\lfloor nR  \rfloor} \}$ and $f^n: \{1, 2, \ldots, 2^{\lfloor nR  \rfloor} \} \rightarrow \mathbb Y^n$.

 We say that rate $R$ is achievable for source-coding the Markoff $X_{[\pi', P]}$ source within distortion-level $D$ under the expected distortion criterion  if there exists a rate $R$ source code $<e^n, f^n>_1^\infty$ such that 
\begin{align}
\limsup_{n \to \infty} E \left [  \frac{1}{n} d^n(X^n_{\pi'}, f^n(e^n(X^n_{[\pi', P]}))) \right ] \leq D
\end{align}
The infimum of all achievable rates is the rate-distortion function $R^E_{X_{[\pi', P]}}(D)$.

The block-independent approximation (henceforth shortened to BIA) $X^T_{[\pi',P]}$ source is a sequence of random vectors $(S_1, S_2, \ldots, S_n, \ldots)$, where $S_i$ are independent, and $\forall i$, $S_i \sim X^T_{[\pi',P]}$. To simplify notation, we will sometimes denote $(S_1, S_2, \ldots)$ by $S$. $S^n$ will denote $(S_1, S_2, \ldots, S_n)$. Note that BIA $X^T_{[\pi',P]}$ source is an i.i.d. vector source and will also be called the vector i.i.d. $X^T_{[\pi',P]}$ source. Since the BIA $X^T_{[\pi',P]}$ source is an i.i.d vector source, the rate-distortion function for it is defined in exactly the same way as for an i.i.d. source. The details are as follows:
The source input space for the BIA $X^T_{[\pi',P]}$ source is $\mathbb X^T$ and the source reproduction space is $\mathbb Y^T$. Denote these by $\mathbb S$ and $\mathbb T$ respectively. A generic point in $\mathbb S$ is a $T$-length sequence $s$. The $i^{th}$ component of $s$ is denoted by $s(i)$. A generic point in $\mathbb T$ is a $T$-length sequence $t$. The $i^{th}$ component of $t$ is denoted by $t(i)$. The single letter distortion measure is denoted by $d'$ and is defined as $d'(s,t) \triangleq \sum_{j=1}^T d(s(j),t(j))$. For $s^n \in \mathbb S^n$, $t^n \in \mathbb T^n$, the $n$-letter distortion measure $d'^n$ is defined additively: $d'^n(s^n, t^n) \triangleq \sum_{i=1}^n d'(s^n(i),t^n(i))$. Note that $s$ can be thought of as either a scalar in $\mathbb S$ or a $T$ dimensional vector in $\mathbb X^T$. With this identification, $d' = d^T$ and $d'^n$ can be thought of as $d^{nT}$. A rate $R$ source code is a sequence $<e^n, f^n>_1^\infty$, where $e^n: \mathbb S^n \rightarrow \{1, 2, \ldots, 2^{\lfloor nR  \rfloor} \}$ and $f^n: \{1, 2, \ldots, 2^{\lfloor nR  \rfloor} \} \rightarrow \mathbb T^n$. We say that rate $R$ is achievable for source-coding the BIA $X^T_{[\pi',P]}$ source within distortion-level $D$ under the expected distortion criterion  if there exist a sequence of rate $R$ source codes $<e^n, f^n>_1^\infty$ such that 
\begin{align}
\limsup_{n \to \infty} E \left [  \frac{1}{n} d'^n(S^n, f^n(e^n(S^n))) \right ] \leq D
\end{align} 
The infimum of all achievable rates corresponding to a given distortion level $D$ is the operational rate-distortion function at that distortion level, henceforth denoted by 
$R^E_{X^T_{[\pi',P]}}(D)$. The normalized rate-distortion function at block-length $T$  and distortion level $D$ is defined as 
\begin{align}
\frac{1}{T} R^E_{X^T_{[\pi',P]}}(TD)
\end{align}
and the limit is
\begin{align} \label{Normalized}
\lim_{T \to \infty} \frac{1}{T} R^E_{X^T_{[\pi',P]}}(TD)
\end{align}
The theorems in this paper prove the equality of $R^E_{X_{[\pi', P]}}(D)$ and (\ref{Normalized}), and that these functions do not depend on $\pi'$. The statements of these theorems are stated in Section \ref{Lemma}. Before that, we carry out a discussion on the rate-distortion function of a non-stationary Markoff chain.
\subsection{Discussion}
To be entirely correct, the rate-distortion function of a Markoff source should be defined as follows: Let $n$ be the block-length. Denote $U_i \triangleq X_{(i-1)n+1}^{in}$. Each $U_i$ is thus, a random vector of length $n$. Let $<e^n, f^n>_1^\infty$ be a source to code the source $X_{[P, \pi']}$. When the block length is $n$, we would like to use the source-code successively over all intervals of time of block-length $n$. Thus, it is more logical to define the distortion as:
\begin{align}
\limsup_{n \to \infty} \sup_{i \in \mathbb N} E \left [\frac{1}{n} d^n(U_i, f^n(e^n (U_i))) \right ]
\end{align}
and correspondingly define the rate-distortion function. This does not end up making a difference, and hence, we stick to the originally given definition for distortion. Note that if $\pi' = \pi$, the stationary distribution, the $\sup$ in the above definition can be  removed since the distribution of $X_{(i-1)n+1}$ is independent of $i$.
\section{The theorems} \label{Lemma}
\begin{thm}\label{OT1}
$R^E_{X_{[\pi', P]}}(D) = R^E_{X_{[\pi,P]}}(D)$ where $\pi$ is the stationary distribution and $\pi'$ is an arbitrary probability distribution on $\mathbb X$
\end{thm}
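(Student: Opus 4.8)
The plan is to prove the two inequalities $R^E_{X_{[\pi', P]}}(D) \le R^E_{X_{[\pi, P]}}(D)$ and $R^E_{X_{[\pi, P]}}(D) \le R^E_{X_{[\pi', P]}}(D)$ separately, in each case turning a good source code for one Markoff chain into an essentially-as-good source code for the other. Only two facts about the chain are needed. First, since $P$ is irreducible and aperiodic, $\mu P^k \to \pi$ in total variation as $k \to \infty$ for \emph{every} initial distribution $\mu$, and there is a $k_0$ for which $P^{k_0}$ has all entries strictly positive, so $\mu P^k$ has full support whenever $k \ge k_0$. Second, if $\|P_Z - P_{Z'}\|_{TV} \le \delta$ then $|E[h(Z)] - E[h(Z')]| \le \delta \sup|h|$ for every bounded $h$; in particular, the normalized expected distortion of any \emph{fixed} code changes by at most $D_{\max}\,\delta$ when the source law is moved by $\delta$ in total variation. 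Throughout, $k = k(n)$ will be a sequence with $k(n)\to\infty$ and $k(n)/n \to 0$, e.g.\ $k(n) = \lfloor \log_2 n\rfloor$.

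\emph{First inequality.} Fix $R > R^E_{X_{[\pi, P]}}(D)$ and a rate-$R$ code $\langle e^m, f^m\rangle_1^\infty$ meeting distortion $D$ for $X_{[\pi, P]}$. To encode $X^n_{[\pi', P]}$, the new code sends the first $k$ symbols losslessly with $\lceil k\log_2|\mathbb X|\rceil$ bits (zero distortion there, since $\mathbb X = \mathbb Y$ and $d(x,x)=0$), and applies $\langle e^{n-k}, f^{n-k}\rangle$ to the suffix $(X_{k+1},\ldots,X_n)$, whose law is $X^{n-k}_{[\pi' P^k, P]}$. The rate overhead $\lceil k\log_2|\mathbb X|\rceil / n$ vanishes. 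The normalized distortion of $\langle e^{n-k}, f^{n-k}\rangle$ run on $X^{n-k}_{[\pi' P^k, P]}$ exceeds the one on $X^{n-k}_{[\pi, P]}$ by at most $D_{\max}\,\|\pi' P^k - \pi\|_{TV}\to 0$, and the latter has $\limsup$ at most $D$; multiplying by $(n-k)/n\to 1$ shows $(R,D)$ is achievable for $X_{[\pi', P]}$. Letting $R\downarrow R^E_{X_{[\pi, P]}}(D)$ gives the inequality.

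\emph{Second inequality.} Fix $R > R^E_{X_{[\pi', P]}}(D)$ and a rate-$R$ code $\langle e^m, f^m\rangle_1^\infty$ meeting $D$ for $X_{[\pi', P]}$. To encode $X^n_{[\pi, P]}$ (with $n$ large enough that $k\ge k_0$): on input $x^n$ the encoder first draws a ``fake past'' $\tilde x^k$ from the conditional law, under $X_{[\pi', P]}$, of $(X_1,\ldots,X_k)$ given $X_{k+1}=x_1$ (a well-defined reverse-chain law since $(\pi' P^k)(x_1)>0$), then outputs $e^{k+n}(\tilde x^k, x^n)$; the decoder computes $f^{k+n}$ and retains its last $n$ coordinates. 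By the Markoff property, the law of $(\tilde x^k, x^n)$ has the same past-given-future structure as $X^{k+n}_{[\pi', P]}$ and hence differs from it in total variation only through the law of its length-$n$ suffix — our suffix has law $X^n_{[\pi, P]}$, that of $X^{k+n}_{[\pi', P]}$ has law $X^n_{[\pi' P^k, P]}$ — so the gap is at most $\|\pi - \pi' P^k\|_{TV}\to 0$. The distortion on the retained $n$ coordinates is bounded above by the full $(k+n)$-block distortion of $\langle e^{k+n}, f^{k+n}\rangle$ (a functional bounded by $(k+n)D_{\max}$), which on $X^{k+n}_{[\pi', P]}$ is $\le (k+n)(D + o(1))$; pushing this across the $o(1)$ total-variation gap and dividing by $n$, with $(k+n)/n\to 1$, gives $\limsup_n \tfrac1n E[\text{distortion}] \le D$. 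The rate is $\lceil (k+n)R\rceil / n \to R$. The encoder is randomized only in drawing $\tilde x^k$, and is derandomized by fixing, for each $n$, the value of that randomness minimizing the conditional expected distortion. Letting $R\downarrow R^E_{X_{[\pi', P]}}(D)$ and combining with the first inequality yields the theorem.

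\emph{Where the difficulty is.} The delicate step is the second inequality. One cannot simply feed the stationary chain to the $\pi'$-code, because the $n$-dimensional laws of $X_{[\pi, P]}$ and $X_{[\pi', P]}$ stay a fixed total-variation distance $\|\pi - \pi'\|_{TV}$ apart for all $n$, so a naive change-of-measure argument loses a multiplicative constant that never closes. The remedy is to run the $\pi'$-code not on the data's own coordinates but on a slightly longer window whose tail carries the (stationary) data and whose head is a distributionally correct simulated past; the point requiring care is the verification, via the reverse-chain conditional law, that prepending this past yields a law within $o(1)$ in total variation of $X^{k+n}_{[\pi', P]}$, together with the derandomization of the resulting randomized encoder. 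The first inequality is routine once one chops off a vanishing-length prefix so that the remaining suffix is already near-stationary.
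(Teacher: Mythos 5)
Your proof is correct. The first inequality is essentially the construction in the paper's Appendix~\ref{EQM}: sacrifice a prefix of vanishing relative length (the paper codes it arbitrarily and absorbs the cost into the distortion term $\frac{\tau^*_\epsilon}{\tau^*_\epsilon+n}D_{\max}$; you transmit it losslessly and absorb the cost into the rate), run the stationary-chain code on the suffix, whose law is within $l^1$ distance $\|\pi'P^k-\pi\|_1 \to 0$ of the stationary block, and transfer the distortion guarantee by the change-of-measure bound for a fixed bounded functional. Where you genuinely diverge is the second inequality. The appendix obtains it from the same prefix-discarding construction ``by interchanging $\pi'$ and $\pi''$''; but run in that direction the construction needs the code designed for $X^n_{[\pi', P]}$ to perform well on the suffix law $X^n_{[\pi'P^{\tau}, P]}$, and these two laws remain an $l^1$ distance of order $\|\pi'-\pi\|_1$ apart for every $n$, so the change-of-measure step in (\ref{O46}) does not by itself close that direction --- exactly the obstruction you identify in your closing paragraph. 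Your fake-past device (drawing $\tilde x^k$ from the reverse-chain conditional given $x_1$, so that the simulated $(k+n)$-block is within $\|\pi-\pi'P^k\|_1 \to 0$ in $l^1$ of $X^{k+n}_{[\pi', P]}$, discarding the first $k$ reproduction letters, and then derandomizing the encoder by convexity) supplies that missing step explicitly. The cost is a more elaborate, initially randomized encoder; the payoff is that both inequalities are actually verified rather than one being deduced from the other by an appeal to symmetry that the prefix-discarding construction alone does not support.
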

\begin{proof}
Follows from \cite{Gray} or see Appendix \ref{EQM} for an independent proof tailored for Markoff chains.
\end{proof}
\begin{thm} \label{OT2}
For $D > 0$,
\begin{align} \label{MainTheorem}
\lim_{T \to \infty} \frac{1}{T} R^E_{X^T_{[\pi', P]}}(TD) \ \mbox{exists, and is independent of $\pi'$} 
\end{align}
\end{thm}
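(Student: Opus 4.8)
The plan is to pass to informational rate--distortion functions and then compare the length-$T$ block started from $\pi'$ with the one started from the stationary law $\pi$. Since the BIA $X^T_{[\pi',P]}$ source is i.i.d.\ over the alphabet $\mathbb X^T$, Shannon's source-coding theorem gives $R^E_{X^T_{[\pi',P]}}(TD)=R^{(I)}_{X^T_{[\pi',P]}}(TD)$, where for a source with law $\mu$ on a finite set and a bounded distortion $\rho$ I write $R^{(I)}_{\mu,\rho}(L)\triangleq\min\{\,I(Z;\hat Z):E\,\rho(Z,\hat Z)\le L\,\}$, the minimum over test channels $P_{\hat Z\mid Z}$. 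For the stationary chain the introduction grants that $R^{\ast}(D):=\lim_{T\to\infty}\tfrac1T R^E_{X^T_{[\pi,P]}}(TD)$ exists and equals $R^E_{X_{[\pi,P]}}(D)$ (alternatively, $T\mapsto R^{(I)}_{X^T_{[\pi,P]}}(TD)$ is subadditive by a product test channel together with stationarity, so the limit exists). Since $R^{\ast}(D)$ does not involve $\pi'$, it suffices to show $\tfrac1T R^{(I)}_{X^T_{[\pi',P]}}(TD)\to R^{\ast}(D)$ for every initial law $\pi'$; combined with Theorem~\ref{OT1} this also identifies the limit with $R^E_{X_{[\pi',P]}}(D)$.

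The argument rests on two ingredients. First, a perturbation estimate: if $\mu,\nu$ are laws on a finite set with bounded distortion $\rho\le\rho_{\max}$ and $\|\mu-\nu\|_{\mathrm{TV}}\le\varepsilon$, then
\begin{align}\label{pert}
R^{(I)}_{\nu,\rho}(L+\varepsilon\rho_{\max})\;\le\;R^{(I)}_{\mu,\rho}(L)\;\le\;R^{(I)}_{\nu,\rho}(L-\varepsilon\rho_{\max}).
\end{align}
I would prove it via a maximal coupling $(Z,Z')$ of $\mu,\nu$, so $P(Z\ne Z')\le\varepsilon$: feed $Z$ through an optimal test channel $W$ for $\mu$ and let the reconstruction $\hat Z$ be generated from $Z$ alone; the induced channel $W'(\cdot\mid z')=P(\hat Z=\cdot\mid Z'=z')$ then satisfies $I_\nu(Z';\hat Z)\le I(Z,Z';\hat Z)=I_\mu(Z;\hat Z)$ (since $\hat Z\perp Z'\mid Z$) and $E_\nu\,\rho(Z',\hat Z)\le E_\mu\,\rho(Z,\hat Z)+\rho_{\max}P(Z\ne Z')\le L+\varepsilon\rho_{\max}$, which gives the left inequality; the right one follows by symmetry. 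Second, geometric ergodicity: since $P^{k_0}$ has strictly positive entries for some $k_0$ (irreducibility, aperiodicity), Dobrushin's contraction coefficient yields $\|\pi'P^{k}-\pi\|_{\mathrm{TV}}\le C\lambda^{k}$ with $C<\infty$, $\lambda<1$ not depending on $\pi'$; coupling the time-$(k{+}1)$ states maximally and running the chains together afterwards shows $\mathrm{law}(X^{m}_{[\pi'P^{k},P]})$ and $\mathrm{law}(X^{m}_{[\pi,P]})$ are within $C\lambda^{k}$ in total variation, uniformly in the block length $m$.

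I then turn to the comparison. Fix $k$ and write $m=T-k$ and $\nu_k=\mathrm{law}(X^{m}_{[\pi'P^{k},P]})$, the law of the last $m$ coordinates of $X^T_{[\pi',P]}$. Discarding the first $k$ coordinates (a function of the source and of the reconstruction) gives $R^{(I)}_{\nu_k}(TD)\le R^{(I)}_{X^T_{[\pi',P]}}(TD)$ by data processing; conversely, using the identity channel on the first $k$ coordinates (cost $\le k\log|\mathbb X|$ bits, zero distortion there since $\mathbb X=\mathbb Y$ and $d(x,x)=0$) and any channel on the last $m$, with concavity of mutual information in the input law used to absorb the conditioning on $X_1^{k}$, gives $R^{(I)}_{X^T_{[\pi',P]}}(TD)\le k\log|\mathbb X|+R^{(I)}_{\nu_k}(TD)$. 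Applying \eqref{pert} with $\varepsilon=C\lambda^{k}$, $\rho_{\max}=mD_{\max}$ to replace $\nu_k$ by $\mathrm{law}(X^m_{[\pi,P]})$, dividing by $T$, and letting $T\to\infty$ with $k$ fixed (so $m/T\to1$, $\tfrac{k\log|\mathbb X|}{T}\to0$, and $\tfrac1m R^{(I)}_{X^m_{[\pi,P]}}(m\delta_m)\to R^{\ast}(\delta)$ whenever $\delta_m\to\delta>0$, since pointwise-convergent convex functions converge locally uniformly) yields $R^{\ast}(D+C\lambda^{k}D_{\max})\le\liminf_T\tfrac1T R^{(I)}_{X^T_{[\pi',P]}}(TD)\le\limsup_T(\cdots)\le R^{\ast}(D-C\lambda^{k}D_{\max})$. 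Letting $k\to\infty$ and invoking continuity of $R^{\ast}$ on $(0,\infty)$ (it is convex and finite there) sandwiches the limit at $R^{\ast}(D)$.

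The main obstacle is that $\|\mathrm{law}(X^T_{[\pi',P]})-\mathrm{law}(X^T_{[\pi,P]})\|_{\mathrm{TV}}$ does not go to $0$ — the early coordinates keep the two blocks apart — so the perturbation lemma cannot be applied to the whole blocks. The remedy is exactly the burn-in split above: a prefix of $O(k)$ coordinates carries only $O(k\log|\mathbb X|)=o(T)$ bits, while the length-$(T-k)$ suffix is genuinely $\lambda^{k}$-close in total variation to a stationary block; the remaining care lies in interchanging the limits $T\to\infty$ and $k\to\infty$, which is handled by the continuity of $R^{\ast}$ together with the local uniform convergence of the finite-block rate--distortion functions.
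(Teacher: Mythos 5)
Your proposal is correct, and its skeleton coincides with the paper's: project onto the last $T-k$ coordinates, observe that the tail marginal under $\pi'$ is within $O(\lambda^k)$ of the stationary tail marginal in total variation (the paper computes exactly this $l^1$ identity in (\ref{l1QQ'})), show the prefix contributes $o(T)$ to the normalized rate--distortion function, and then interchange the limits $T\to\infty$ and $k\to\infty$. Where you genuinely diverge is in the two technical ingredients. For continuity of the rate--distortion function in the source law, the paper invokes a quantitative bound from the reference [Hari] of the form $|R^I_{Q'}(D)-R^I_{Q}(D)|\leq K\,\delta\log\frac{1}{\delta}$ at a \emph{fixed} distortion level, whereas you prove a self-contained maximal-coupling sandwich that keeps the mutual information but shifts the distortion level by $\varepsilon\rho_{\max}$, and then absorb that shift using convexity and local uniform convergence of $\frac{1}{m}R^I_{X^m_{[\pi,P]}}(m\cdot)$ to the continuous limit $R^{\ast}$. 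For the block-versus-tail comparison, the paper works operationally (code surgery: leave the first $\tau$ slots uncoded and pay $\tau D_{\max}$ in distortion, controlled by its Lemma \ref{ConvexLemma}), whereas you work informationally (data processing for one direction; identity test channel on the prefix, paying $k\log|\mathbb X|$ in rate with zero distortion, with concavity of $I$ in the input law to remove the conditioning on $X_1^k$, for the other). Your route buys self-containedness — no external continuity lemma is needed — at the price of leaning on the standard fact that pointwise convergence of convex functions on an open interval is locally uniform; the paper's route keeps all distortion arguments at fixed levels where possible and isolates the convexity bookkeeping in its Lemma \ref{ConvexLemma}. Both are sound.
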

This theorem will be proved in Section \ref{Pf}.
\begin{thm}\label{OT3}
\begin{align}
R^E_{X_{[\pi',P]}}(D) = R^E_{X_{[\pi,P]}}(D) = \lim_{T \to \infty} \frac{1}{T} R^E_{X^T_{[\pi, P]}}(TD) =\lim_{T \to \infty} \frac{1}{T} R^E_{X^T_{[\pi', P]}}(TD)
\end{align}
 where $\pi$ is the stationary distribution and $\pi'$ is an arbitrary distribution on $\mathbb X$.
\end{thm}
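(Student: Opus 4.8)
The plan is to obtain Theorem \ref{OT3} by simply chaining together the two preceding theorems together with the classical result quoted in the introduction. The first equality, $R^E_{X_{[\pi',P]}}(D) = R^E_{X_{[\pi,P]}}(D)$, is exactly Theorem \ref{OT1}. The last equality, $\lim_{T \to \infty} \frac{1}{T} R^E_{X^T_{[\pi, P]}}(TD) = \lim_{T \to \infty} \frac{1}{T} R^E_{X^T_{[\pi', P]}}(TD)$, is the content of Theorem \ref{OT2}: that theorem asserts both that the limit exists and that its value does not depend on the initial distribution, so in particular the value at $\pi'$ agrees with the value at the stationary distribution $\pi$. What remains is the middle equality, $R^E_{X_{[\pi,P]}}(D) = \lim_{T \to \infty} \frac{1}{T} R^E_{X^T_{[\pi, P]}}(TD)$, i.e. the statement that for a Markoff chain \emph{started in its stationary distribution} the per-letter rate-distortion function coincides with the limit of the normalized rate-distortion functions of its block-independent approximations.

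For that middle equality I would invoke the classical source-coding theorem for stationary ergodic sources, exactly as flagged in the introduction via \cite{Shannon} and \cite{GallagerInformationTheory} (in particular Definition (9.8.3) and Theorem 9.8.3 there). When $X_1 \sim \pi$, the Markoff chain $X_{[\pi,P]}$ is stationary, and since $P$ is irreducible and aperiodic it is ergodic; hence the general theorem for stationary ergodic sources applies and gives that the limit of the normalized rate-distortion functions of its block-independent approximations equals $R^E_{X_{[\pi,P]}}(D)$. One small point to verify here is the bookkeeping between the two normalizations: the ``block-independent approximation at block-length $T$'' in the cited literature is precisely the i.i.d.\ vector source $S_i \sim X^T_{[\pi,P]}$ of Section \ref{NotAndDef}, and the per-symbol distortion level $D$ for the vector source corresponds to total distortion $TD$ over a block, which is why the normalized object is $\frac{1}{T} R^E_{X^T_{[\pi,P]}}(TD)$; this matches the definition in \eqref{Normalized} so no genuine work is needed, only a sentence confirming the correspondence.

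The only real mathematical obstacle in this paper is Theorem \ref{OT2}, and that is deferred to Section \ref{Pf}; once \ref{OT1} and \ref{OT2} are in hand, the proof of \ref{OT3} is a two-line assembly: apply Theorem \ref{OT1} for the first equality, apply the stationary-ergodic source-coding theorem for the second, and apply the $\pi'$-independence half of Theorem \ref{OT2} for the third (reading it once with initial distribution $\pi$ and once with initial distribution $\pi'$). I do not anticipate any hidden difficulty in this final step; the burden of the paper rests entirely on establishing Theorem \ref{OT2}.
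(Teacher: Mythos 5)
Your proposal is correct and matches the paper's own proof, which simply cites Theorem \ref{OT1}, Theorem \ref{OT2}, and \cite{GallagerInformationTheory} (Pages 490--500) for exactly the three equalities you identify. Your version just spells out which cited result supplies which equality, which is a reasonable elaboration but not a different argument.
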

\begin{proof}
Follows from Theorems \ref{OT1}, \ref{OT2} and \cite{GallagerInformationTheory}, Pages 490-500.
\end{proof}
In order to prove Theorem \ref{OT2}, we need more notation and this is the subject of the next section. The theorem is proved in the section following the next.
\section{Further notation} \label{FN}
The information-theoretic rate-distortion function of the vector i.i.d. $X^T_{[\pi',P]}$ source is denoted and defined as
\begin{align}
R^I_{X^T_{[\pi',P]}}(D) \triangleq
\inf_{\mathbb W
 }
I(X^T; Y^T)
\end{align}
where $X^T  \sim X^T_{[\pi',P]}$  and  $\mathbb W$  is the set of $W: \mathbb S \rightarrow \mathbb P(\mathbb T)$ defined as
\begin{align}
\mathbb W \triangleq \left \{ W \ \left |\ \sum_{s \in \mathbb S, y \in \mathbb T}
p_{X^T_{[\pi',P]}}(s)W(t|s) d'(s,t) \leq D  \right . \right \}
\end{align} 
where $p_{X^T_{[\pi',P]}}$ denotes the distribution corresponding to $X^T_{[\pi',P]}$. Note that this is the usual definition of the information-theoretic rate-distortion function for an i.i.d. source; just that the source under consideration is vector i.i.d.

Let $s \in \mathbb S$. Denote by $J_{\tau}$ the projection transformation. 
$
J_{\tau}(s) \triangleq (s(\tau+1), s(\tau+2), \ldots s(T))
$.
Fix $s$. Denote
$
\mathbb A \triangleq \{t \in \mathbb S \ | \  J_{\tau}(t) = J_{\tau}(s) \}
$.
Under the distribution induced by $X^T_{[\pi', P]}$, the probability of the set $\mathbb A$ is
\begin{align}
\pi'^{(\tau)}(s(\tau+1))\prod_{i=\tau+1}^{T-1} p_{s(i)s(i+1)}
\end{align}
for some distribution $\pi'^{(\tau)}$ on $\mathbb X$ which satisfies $\pi'^{(\tau)}(x) \to \pi(x)$ as $\tau \to \infty$ $\forall x \in \mathbb X$. Note further, that if $\pi'=\pi$, $\pi'^{(\tau)} = \pi$. For $x \in \mathbb X$, denote  
$
\pi'^{(\tau)}(x) = \pi(x) + {\delta}^{(\tau)}(x)
$
where ${\delta}^{(\tau)}(x) \to 0$ as $\tau \to \infty$. ${\delta}^{(\tau)}(x)$ may be negative.

Denote by $J_{\tau}(X^T_{[\pi', P]})$, the probability distribution on $\mathbb X^{T-\tau}$ which causes the probability of a sequence $r \in \mathbb X^{T-\tau}$ to be
\begin{align}
\pi'^{(\tau)}(r(1)) \prod_{i=1}^{T-1} p_{r(i)r(i+1)}
\end{align}
Note that $J_{\tau}(X^T_{[\pi', P]})$ is the marginal of $X^T_{[\pi', P]}$ on the last $T-\tau$ dimensions. An i.i.d. source can be formed from $J_{\tau}(X^T_{[\pi', P]})$ by taking  a sequence of independent random vectors, each distributed as $J_{\tau}(X^T_{[\pi', P]})$. This will be called the vector i.i.d. $J_{\tau}(X^T_{[\pi', P]})$ source. The rate-distortion function for the vector i.i.d. $J_{\tau}(X^T_{[\pi', P]})$ source in defined in the same way as  the rate-distortion function for the vector i.i.d. $X^T_{[\pi', P]}$ source: 
For $T-\tau$ length sequences, the single-letter distortion measure is defined as $d''(p,q) = \sum_{i=1}^{T-\tau} d(p(i), q(i))$ where $p \in \mathbb X^{T-\tau}$, $q \in \mathbb Y^{T-\tau}$.  The $n$-letter rate-distortion measure is defined additively: $d''^n(p^n,q^n) = \sum_{i=1}^n d''(p^n(i), q^n(i))$ where $p^n \in (\mathbb X^{T-\tau})^n$ and $q^n \in (\mathbb Y^{T-\tau})^n$. A sequence of rate $R$ source codes is a sequence $<e^n, f^n>_1^\infty$, where 
$e^n: (\mathbb X^{T-\tau})^n \rightarrow \{1, 2, \ldots, 2^{\lfloor nR  \rfloor} \}$ 
and  $f^n: \{1, 2, \ldots, 2^{\lfloor nR  \rfloor} \} \rightarrow \mathbb (\mathbb Y^{T-\tau})^n$. The rate-distortion functions for i.i.d. $J_{\tau}(X^T_{[\pi', P]})$ source when the distortion measure is d'' is defined analogously as for the i.i.d. $X^T_{[\pi', P]}$ vector source; the details are omitted. Denote the operational rate-distortion function for the vector i.i.d. $J_{\tau}(X^T_{[\pi'', P]})$ source by $R^E_{J_{\tau}(X^T_{[\pi', P]})}(\cdot)$ and  denote the information-theoretic rate-distortion function for the same source by $R^I_{J_{\tau}(X^T_{[\pi', P]})}(\cdot)$.

For the same reason as that stated before regarding $d'$, $d'' = d^{T-\tau}$ and $d''^n$ can be thought of as $d^{n(T-\tau)}$.

\section{{Proof of the Theorem \ref{OT2}}} \label{Pf}
Before we prove the theorem, note the following:
\begin{lemma}\label{ConvexLemma}
 Let $f: [0, \infty) \rightarrow [0, \infty )$ be a convex $\cup$ non-increasing function. Let $f(0) = K$. Let $0 < a < a'$. Then,
\begin{align}
f(a) - f(a') \leq \frac{K}{a}(a'-a)
\end{align}
\end{lemma}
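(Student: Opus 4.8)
The plan is to exploit the defining inequality of convexity applied to the three points $0 < a < a'$, together with the fact that $f$ is non-increasing and non-negative. Write $a$ as a convex combination of $0$ and $a'$: since $0 < a < a'$, we have $a = (1-\lambda)\cdot 0 + \lambda\cdot a'$ with $\lambda = a/a' \in (0,1)$. Convexity of $f$ then gives
\begin{align}
f(a) \leq (1-\lambda) f(0) + \lambda f(a') = \left(1 - \frac{a}{a'}\right) K + \frac{a}{a'} f(a').
\end{align}
Rearranging, $f(a) - f(a') \leq \left(1 - \frac{a}{a'}\right)(K - f(a')) = \frac{a'-a}{a'}(K - f(a'))$.

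To finish, I would bound $K - f(a') \leq K$, which holds because $f(a') \geq 0$ (the codomain is $[0,\infty)$). Hence $f(a) - f(a') \leq \frac{a'-a}{a'}K$. Since $a < a'$, we have $\frac{1}{a'} < \frac{1}{a}$, so $\frac{a'-a}{a'}K \leq \frac{a'-a}{a}K = \frac{K}{a}(a'-a)$, which is exactly the claimed bound. Note that the monotonicity hypothesis (non-increasing) is not strictly needed for this chain of inequalities once non-negativity is in hand — it merely guarantees $f(a) - f(a') \geq 0$ so the inequality is not vacuous; I would mention this but not belabor it.

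There is no real obstacle here; the only thing to be careful about is the direction of the two estimates $K - f(a') \leq K$ and $1/a' \leq 1/a$, both of which go the right way. An alternative, slightly slicker route is to use the standard fact that for a convex function the slope of secant lines is monotone: the secant slope from $0$ to $a$ is at least the secant slope from $0$ to $a'$, i.e.
\begin{align}
\frac{f(0) - f(a)}{a} \geq \frac{f(0) - f(a')}{a'} \geq \frac{f(a) - f(a')}{a'-a},
\end{align}
where the last step is again secant-slope monotonicity (from $a$ to $a'$ versus from $0$ to $a$, rearranged). Wait — more directly, secant-slope monotonicity gives $\frac{f(a)-f(a')}{a-a'} \leq \frac{f(0)-f(a)}{0-a}$, i.e. $\frac{f(a')-f(a)}{a'-a} \geq \frac{f(a)-f(0)}{a}$, hence $f(a)-f(a') \leq \frac{a'-a}{a}(f(0)-f(a)) \leq \frac{a'-a}{a}f(0) = \frac{K}{a}(a'-a)$, using $f(a)\geq 0$. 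I would present whichever of these two derivations is shorter in the final writeup; the first (explicit convex-combination) version is the most self-contained and is what I would include.
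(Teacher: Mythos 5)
Your proposal is correct, and the "alternative, slicker route" you sketch at the end (secant-slope monotonicity giving $\frac{f(a')-f(a)}{a'-a}\geq\frac{f(a)-f(0)}{a}$, then dropping $f(a)\geq 0$) is exactly the paper's proof; your primary derivation is the same convexity inequality at $0,a,a'$ rearranged slightly differently, dropping $f(a')\geq 0$ and using $1/a'\leq 1/a$ instead. Your side remark that the non-increasing hypothesis is not actually needed is also accurate.
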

\begin{proof}
\begin{align}
& f(a)-f(a') \leq \frac{K}{a}(a'-a) \impliedby \frac{f(a)-f(a')}{a'-a} \leq \frac{f(0)-f(a)}{a}  \\
 \impliedby & \frac{f(a')-f(a)}{a'-a} \geq \frac{f(a)-f(0)}{a-0} \impliedby \left (1-\frac{a}{a'} \right )f(0) + \frac{a}{a'} f(a') \geq f(a)  \nonumber \\
 \impliedby & \mbox{Definition of convexity, see for example \cite{V}} \nonumber
\end{align}
\end{proof}

This lemma is a direct result of the the definition of convexity and this observation will be used crucially in the proof of the theorem , which follows below.

Proof of Theorem \ref{OT2}:
\begin{proof}
By the rate-distortion theorem, $R^E_{X^T_{[\pi', P]}}(TD) = R^I_{X^T_{[\pi', P]}}(TD)$.  Comparing definitions with \cite{GallagerInformationTheory}, Page 491,
\begin{align}
\frac{1}{T}R^I_{X^T_{[\pi, P]}}(TD) \ \mbox{ (notation in this document)}  = R_T(D) \ \mbox{(notation in \cite{GallagerInformationTheory})}
\end{align}
By Theorem 9.8.1 in \cite{GallagerInformationTheory}, it follows that
\begin{align}\label{LimExists}
\lim_{T \to \infty} \frac{1}{T} R^E_{X^T_{[\pi, P]}}(TD) \ \mbox{exists}
\end{align}
(\ref{LimExists}) will be used crucially towards the end of the proof.

The proof follows three steps:
\begin{enumerate}
\item
Bound the difference between $R^E_{J_{\tau} (X^T_{[\pi', P]})}(\cdot)$ and $R^E_{J_{\tau} (X^T_{[\pi, P]})}(\cdot)$.
\item
Relate $R^E_{J_{\tau}(X^T_{[\pi', P]})}(\cdot)$ and $R^E_{X^T_{[\pi', P]}}(\cdot)$.
\item
Use these relations to prove the desired result by computing various bounds.
\end{enumerate}

The first step in the proof is to come up with a bound for the difference between $R^E_{J_{\tau} (X^T_{[\pi', P]})}(\cdot)$ and $R^E_{J_{\tau} (X^T_{[\pi, P]})}(\cdot)$. To this end, we first do the same for $R^I_{J_{\tau} (X^T_{[\pi', P]})}(\cdot)$ and $R^I_{J_{\tau} (X^T_{[\pi, P]})}(\cdot)$. To this end, denote the distribution  corresponding to $J_{\tau}(X^T_{[\pi', P]})$ on $\mathbb X^{T-\tau}$ by $Q'$, and  the distribution corresponding to $J_{\tau}(X^T_{[\pi, P]})$ by $Q$. The $l^1$ distance between $Q'$ and $Q$,
\begin{align}\label{l1QQ'}
l^1(Q',Q) & \triangleq \sum_{x^{t-\tau} \in \mathbb X^{T-\tau}} \left |Q'(x^{T-\tau}) -  Q(x^{T-\tau} )\right |  \\
&= \sum_{x^{t-\tau} \in \mathbb X^{t-\tau}} |{\pi'}^{(\tau)}(x^{t-\tau}(1) - \pi^{(\tau)}(x^{t-\tau}(1)| \prod_{i=1}^{T-\tau-1}p_{x^{t-\tau}(i)x^{t-\tau}(i+1)} \nonumber \\
&=\sum_{x \in \mathbb X} |\delta^{(\tau)}(x)| \nonumber \\
& \triangleq \delta^{(\tau)} \nonumber
\end{align}
In the above calculation, we have used the fact that if $\pi'=\pi$, $\pi'^{(\tau)} = \pi$.

Condition (Z) stated in \cite{Hari} holds based on the assumptions we have made, Lemma 2 in \cite{Hari} can be applied, and it follows that for $\tau$ sufficiently large (reasoning stated below after a few lines) and any $T > \tau$,
\begin{align} \label{BoundJtaupipi'}
\left |  \frac{1}{T-\tau} R^I_{J_{\tau}(X^T_{[\pi', P]})}((T-\tau)D) - \frac{1}{T-\tau} R^I_{J_{\tau}(X^T_{[\pi, P]})}((T-\tau)D) \right | \leq K \delta^{(\tau)} \log \frac{1}{\delta^{(\tau)}}
\end{align} 
where
\begin{align}
K = \frac{1}{T-\tau} \frac{7d^*}{\tilde d} \log \left (\left |\mathbb X^{T-\tau} \right | \left |\mathbb Y^{T-\tau} \right | \right )
\end{align}
In (\ref{BoundJtaupipi'}) , $\delta^{(\tau)} \log \frac{1}{\delta^{(\tau)}}$ is defined as zero if $\delta^{(\tau)}$ is zero. $|\mathbb X^{T-\tau}|$ and $|\mathbb Y^{T-\tau}|$ denote the cardinalities of the input and output spaces on which the random source ${J_{\tau}(X^T_{[\pi', P]})}$ is defined. $d^*$ is 
defined as
\begin{align}
d^*   \triangleq \max_{x^{T-\tau} \in \mathbb X^{T-\tau}, ^{yT-\tau} \in \mathbb Y^{T-\tau}}
                               d''(x^{T-\tau}, y^{T-\tau} 
       = (T-\tau) D_{\max}                     
\end{align}
and
$\tilde d$ is defined as 
\begin{align}
\tilde{d}  \triangleq 
 \min_{\{ x^{T-\tau} \in \mathbb X^{T-\tau}, y^{T-\tau} \in \mathbb Y^{T-\tau}\ | \ 
        d''(x^{T-\tau}, y^{T-\tau}) > 0\}} d''(x^{T-\tau}, y^{T-\tau}) 
                = (T-\tau) D_{\min} 
\end{align}

It follows that 
\begin{align}
K = 7 \frac{D_{\max}}{D_{\min}} 
         (\log (|\mathbb X|) +  \log (|\mathbb Y|))
\end{align}

Note that $K$ is a constant independent of $T, \tau, D$.

Also, we said above that (\ref{BoundJtaupipi'}) holds for $\tau$ sufficiently large: this is because by Lemma 2 in \cite{Hari}, we need $\tau$ large enough so that
\begin{align} \label{deltabound}
\delta^{(\tau)} \leq 4 \frac{D_{\min}}{D_{\max}}
\end{align}
which is possible considering the fact that $\delta^{(\tau)} \to 0$ as $\tau \to \infty$, and it is for this reason that we require $\tau$ to be sufficiently large.

Note that the bound in (\ref{deltabound}) is independent of $T, \tau$. It then follows from (\ref{BoundJtaupipi'}) and the equality of information-theoretic and operational rate-distortion functions for i.i.d. sources, that for $\tau$ sufficiently large and any $T > \tau$,
\begin{align} \label{Jpipi'relation}
\left |  \frac{1}{T-\tau} R^E_{J_{\tau}(X^T_{[\pi', P]})}((T-\tau)D) - \frac{1}{T-\tau} R^E_{J_{\tau}(X^T_{[\pi, P]})}((T-\tau)D) \right | \leq K \delta^{(\tau)} \log \frac{1}{\delta^{(\tau)}}
\end{align} 
The bound (\ref{Jpipi'relation}) will be used crucially later, towards the end of the proof.

Next step is to relate $R^E_{J_{\tau}(X^T_{[\pi', P]})}(\cdot)$ and $R^E_{X^T_{[\pi', P]}}(\cdot)$.
 We will argue the following:
\begin{align} \label{REJBound1}
R^E_{X^T_{[\pi', P]}}((T-\tau)D + \tau D_{\max}) \leq R^E_{J_{\tau}(X^T_{[\pi', P]})}((T-\tau)D)
\end{align}
and
\begin{align} \label{REJBound2}
R^E_{J_{\tau}(X^T_{[\pi', P]})}(TD) \leq R^E_{X^T_{[\pi', P]}}(TD)
\end{align}
Very rough idea to prove (\ref{REJBound1}) is the following: Given a sequence of  rate $R$ source codes for the vector i.i.d. $J_{\tau}(X^T_{[\pi', P]})$ source, we can use the same sequence of rate $R$ source-codes for the vector  i.i.d. $X^T_{[\pi', P]}$ source by not coding the time-slots which were not projected onto when defining  $J_{\tau}(X^T_{[\pi', P]})$. These banished slots will incur a maximum distortion of $\tau D_{\max}$ per symbol of $X^T_{[\pi', P]}$. (\ref{REJBound1}) follows. See Appendix \ref{App} for precise argument.

Very rough idea to prove (\ref{REJBound2}) is the following: Consider a two-dimensional random vector $(A,B)$ on some space and the i.i.d. source got by taking i.i.d. copies of $(A,B)$. Consider a distortion measure which is additive over the two dimensions. Consider, also, the i.i.d. source formed by taking identical copies of $A$. Then, for a given distortion level, the rate-distortion function of the vector i.i.d. $(A,B)$ source is greater than or equal to the rate-distortion function of the i.i.d. $A$ source. This is stated more rigorously in Appendix \ref{App}. Note that $J_{\tau}(X^T_{[\pi, P]})$ is a projection of $X^T_{[\pi, P]}$ onto certain dimensions and the distortion measure over these dimensions is additive. (\ref{REJBound2}) follows from this.

Next, we get to Step 3. 
Assuming $TD>\tau D_{\max}$, by replacing $D$ in (\ref{REJBound1})  by
\begin{align}
D =  \frac{TD-\tau D_{\max}}{T-\tau} 
\end{align} 
 and by  (\ref{REJBound2}), it follows that
\begin{align} \label{BKBKBound}
 R^E_{J_{\tau}(X^T_{[\pi', P]})}(TD) 
\leq  R^E_{X^T_{[\pi', P]}}(TD) 
 \leq R^E_{J_{\tau}(X^T_{[\pi', P]})} ( TD-\tau D_{\max})
\end{align}
It follows from (\ref{BKBKBound}) by rearranging, that
\begin{align} \label{BBound} 
 0 \leq  R^E_{X^T_{[\pi', P]}}(TD) - R^E_{J_{\tau}(X^T_{[\pi', P]})}(TD) \leq
 R^E_{J_{\tau}(X^T_{[\pi', P]})} ( TD-\tau D_{\max}) - R^E_{J_{\tau}(X^T_{[\pi', P]})}(TD)
\end{align}
By noting that  $R^E_{J_{\tau}(X^T_{[\pi', P]})}(D)$ is a non-increasing, convex $\cup$ function of $D$ which is upper bounded by $(T-\tau)\log |\mathbb X|$ at $D = 0$, it follows, assuming that $TD > \tau D_{\max}$,  by Lemma \ref{ConvexLemma} that
\begin{align}\label{BBBBBound}
R^E_{J_{\tau}(X^T_{[\pi', P]})} ( TD-\tau D_{\max}) -  R^E_{J_{\tau}(X^T_{[\pi', P]})}(TD)  \leq \tau D_{\max} \log |\mathbb X|
\end{align}
From (\ref{BBBBBound}) and (\ref{BBound}), it follows that 
\begin{align} \label{E1}
\lim_{T \to \infty} \left [  \frac{1}{T}R^E_{X^T_{[\pi', P]}}(TD)  -  \frac{1}{T}R^E_{J_{\tau}(X^T_{[\pi', P]})}(TD) \right ] = 0 
\end{align}
Note further, by noting that $R^E_{X^T_{[\pi', P]}}(TD) \leq T \log |\mathbb X|$, that 
\begin{align}\label{E2}
\lim_{T \to \infty} \left | \frac{1}{T-\tau}R^E_{X^T_{[\pi', P]}}(TD) -  \frac{1}{T}R^E_{X^T_{[\pi', P]}}(TD) \right |
\leq \lim_{T \to \infty} \frac{\tau}{T(T-\tau)} T \log|\mathbb X| = 0
\end{align}
Also, by noting that $R^E_{J_{\tau}(X^T_{[\pi', P]})}(D)$ is a non-increasing, convex $\cup$ function of $D$ which is upper bounded by $(T-\tau)|\mathbb X|$, it follows by use of Lemma \ref{ConvexLemma} that
\begin{align}\label{E3}
& \lim_{T \to \infty} \left | \frac{1}{T-\tau}R^E_{J_{\tau}(X^T_{[\pi', P]})}(TD) - 
 \frac{1}{T-\tau}R^E_{J_{\tau}(X^T_{[\pi', P]})}((T-\tau)D)  \right |   \leq  \\
& \hspace{6cm} \lim_{T \to \infty} \frac{\log|\mathbb X|}{(T-\tau)D}\tau D \to 0 \ \mbox{as} \ T \to \infty \nonumber
\end{align}
It follows, then, from (\ref{E1}), (\ref{E2}), (\ref{E3}) and by noting that
\begin{align}
\lim_{n \to \infty} a_n + \lim_{n \to \infty} b_n + \lim_{n \to \infty} c_n = \lim_{n \to \infty} (a_n + b_n + c_n)
\end{align} 
if the three limits on the left hand side exist (follows from definitions, see for example \cite{V}), that
\begin{align} \label{Anotherbound}
\lim_{T \to \infty} \left [ \frac{1}{T}R^E_{X^T_{[\pi', P]}}(TD)  - 
                                                 \frac{1}{T-\tau}R^E_{J_{\tau}(X^T_{[\pi', P]})}((T-\tau)D)  \right ] = 0
\end{align} 
From (\ref{Jpipi'relation}) and (\ref{Anotherbound}), it follows by the use of triangle inequality, that for $\tau$ sufficiently large and $T > \tau$,
\begin{align} \label{EEE1}
\left | \frac{1}{T}R^E_{X^T_{[\pi', P]}}(TD)  - 
                                                 \frac{1}{T-\tau}R^E_{J_{\tau}(X^T_{[\pi, P]})}((T-\tau)D) \right |
                                                  \leq K\delta^{(\tau)} \log \frac{1}{\delta^{(\tau)}} + \kappa_{\tau,T}
\end{align}
for some $\kappa_{\tau,T} \to 0$ as $T \to \infty$.

The above equation holds for $\pi' = \pi$ too, that is, for $\tau$ sufficiently large and $T > \tau$,
\begin{align}\label{EEE2}
\left | \frac{1}{T}R^E_{X^T_{[\pi, P]}}(TD)  - 
                                                 \frac{1}{T-\tau}R^E_{J_{\tau}(X^T_{[\pi, P]})}((T-\tau)D) \right |
                                                  \leq K\delta^{(\tau)} \log \frac{1}{\delta^{(\tau)}} + \eta_{\tau,T}
\end{align}
for some $\eta_{\tau,T} \to 0$ as $T \to \infty$.

From (\ref{EEE1}) and (\ref{EEE2}), by use of the triangle inequality, it follows, that for $\tau$ sufficiently large and $T > \tau$,

\begin{align} \label{KKK1}
\left |  \frac{1}{T}R^E_{X^T_{[\pi', P]}}(TD)  - \frac{1}{T}R^E_{X^T_{[\pi, P]}}(TD) \right |
\leq 2K \delta^{(\tau)} \log \frac{1}{\delta^{(\tau)}}  + \eta_{\tau,T} + \kappa_{\tau,T}
\end{align}

From (\ref{KKK1}) and (\ref{LimExists}),  and by noting that $\delta^{(\tau)} \log \frac{1}{\delta^{(\tau)}} \to 0$ as $\tau \to \infty$, $\eta_{\tau,T}  \to 0$ as $T \to \infty$, and $\kappa_{\tau,T} \to 0$ as $T \to \infty$, it follows that
\begin{align}\label{KKK2}
 \lim_{T \to \infty}  \frac{1}{T}R^E_{X^T_{[\pi', P]}}(TD) \ \mbox{exists and is independent of $\pi'$}
\end{align}
This finishes the proof 
\end{proof}
The assumptions $\mathbb X=\mathbb Y$, $d(x,x) =0$, $d(x,y) > 0$ if $x \neq y$ which have been made are not necessary, and can be replaced by weaker assumptions. Nothing is lost in terms of idea of the proof by making these assumptions, and making these assumptions  prevents one from thinking of pathological cases; for these reasons they have been made.

\section{$\psi$-mixing sources or a variant?}

A set of sources to which this result may be generalizable with the proof technique used is $\psi$-mixing sources or close variants, appropriately defined. See \cite{Prohorov}, \cite{Bradley} and \cite{paperTogether2} for mixing of sources and \cite{Bradley}, \cite{paperTogether2}, in particular, for results on $\psi$-mixing sources. The main property (among others) that made $\psi$-mixing sources amenable to the result in \cite{paperTogether2} is the decomposition in Lemma 1 in \cite{paperTogether2}, wherein, a stationary $\psi$-mixing source is written as a convex combination of an i.i.d. distribution and another general distribution where the i.i.d. distribution dominates as memory is lost with time. Precisely, the equation is Equation (19) in \cite{paperTogether2}:
\begin{align} \label{EqPsiConvex}
\Pr(X_{t+\tau+1}^{t+\tau+T} \in \mathbb B|X_1^t \in \mathbb A) = (1-\lambda_\tau)P_T(\mathbb B) + \lambda_{\tau} P'_{t, \tau, T, \mathbb A}(\mathbb B)
\end{align}
where $\lambda_{\tau} \to 0$ as $\tau \to \infty$. This lemma, though, required stationarity. If a variant of (\ref{EqPsiConvex}) would hold for non-stationary sources, then, there is a possibility that the result in this paper be generalized to such sources. Irreducible, aperiodic Markoff chains statisfy this property, with $P_T(\mathbb B)$ taken as the stationary distribution, and $P'$ is some distribution depending on the initial distribution of the Markoff chain.  An important bound in proving Theorem \ref{OT2} in this paper is the $l^1$ distance between $Q$ and $Q'$, see (\ref{l1QQ'}). This result will hold for sources which satisfy (\ref{EqPsiConvex}) or a variant. Similarly, proving  (\ref{REJBound1}) and (\ref{REJBound2}) in the proof of Theorem \ref{OT2}  or similar equations  may also be possible. The rest of the proof of Theorem \ref{OT2}  is bounding various differences of `close by' rate-distortion functions and this may be possible too. This is just an idea at this point and needs to be studied carefully to see if any of this is at all possible.

\section{Recapitulation and research directions}
In this paper, it was proved that the limit of the normalized rate-distortion functions of block independent approximations of  an irreducible, aperiodic Markoff chain is independent of the initial distribution and is equal to the rate-distortion function of the Markoff chain.

It would be worthwhile trying to generalize this theorem to ergodic sources to the extent possible, not necessarily Markoff sources, in particular, to $\psi$-mixing sources; this would not only make the result general, but also shed light on the `internal workings' of rate-distortion theory. Further, it would be worthwhile trying to prove this result using existing literature, in particular, see if it follows directly from some result, for example, in \cite{Gray}; this would help with generalization and insight into the `internal workings' of rate-distortion theory, too.

\section{Acknowledgements}
The author is infinitely grateful to Prof. Robert Gray for ten's of hours of his time spent in insightful discussions with the author.
\bibliographystyle{IEEEtran}
\bibliography{togetherpaperBibliography}

\appendix

\section{Proof of Theorem \ref{OT1}} \label{EQM}
\begin{proof}
Consider two Markoff chains $X_{[\pi', P]} = (X'_1, X'_2, \ldots)$ and $X_{[\pi'', P]} = (X''_1, X''_2, \ldots)$, where $\pi'$ and $\pi''$ probability distributions on $\mathbb X$. 

Denote $(X'_1, X'_2, \ldots, X'_n)$ by ${X'}^n$ and $(X''_1, X''_2, \ldots, X''_n)$ by ${X''}^n$.  Let $\tau$ be an integer. Think of $n$ as large and $\tau$ to be much smaller than $n$. Denote the distribution of $X'_{\tau}$ by $\mu'$ and the distribution of $X''_{\tau}$ by $\mu''$.  Denote, $\epsilon' = \sum_{x \in \mathbb X} |\mu'(x) - \pi(x)|$ and $\epsilon'' = \sum_{x \in \mathbb X} |\mu''(x) - \pi(x)|$ , where $\pi$ is the stationary distribution of the Markoff chain (note that both Markoff chains have the same transition probability matrix $P$). For every $\epsilon > 0$, $\exists \tau^*_{\epsilon}$ such that $\epsilon'<\frac{\epsilon}{2}$ and $\epsilon'' < \frac{\epsilon}{2} \forall \tau \geq \tau^*_{\epsilon}$. Let $<e^n, f^n>_1^\infty$ be a source-code. Let the block-length be $n$. Think of $n$ large and $\tau^*_{\epsilon} << n$. Use $(e^n, f^n)$ to code ${K'}^n \triangleq (X'_{\tau^*_{\epsilon}+1}, X'_{\tau^*_{\epsilon} + 2}, \ldots, X'_{\tau^*_{\epsilon}+n})$ and ${K''}^n \triangleq (X''_{\tau^*_{\epsilon}+1}, X''_{\tau^*_{\epsilon} + 2}, \ldots, X''_{\tau^*_{\epsilon}+n})$. 
Note that 
\begin{align} \label{O46}
\left | E \left [  \frac{1}{n} d^n({K''}^n, f^n(e^n({K'}^n)))  \right ] - 
 E \left [  \frac{1}{n} d^n({K'}^n, f^n(e^n({K'}^n))) \right ]  \right |  \leq \epsilon D_{\max}
\end{align}
For $\delta > 0$ (think of $\delta$ small), $D > 0$,  let $<{e'}^n, {f'}^n>_1^\infty$ be a source-code with rate $\leq R^E_{[P, \pi']}(D) + \delta$ to code the $X_{[P, \pi']}$ source with distortion $D$. Construct a source code $<{e''}^n, {f''}^n>_1^\infty$ to code the $X_{[P, \pi'']}$ source as follows. When the block-length is $n+\tau^*_{\epsilon}$, code $X''_1, X''_2, \ldots, X''_{\tau^*_{\epsilon}}$ arbitrarily. and code $(X''_{\tau^*_{\epsilon}+1}, X''_{\tau^*_{\epsilon}+2}, \ldots, X''_{\tau^*_{\epsilon}+n})$ using $({e'}^n, {f'}^n)$. It follows, by calculation of the distortion achieved for $(X''_1, X''_2, \ldots, X''_{\tau^*+n})$ by use of this code, in the process, using (\ref{O46})
\begin{align}
R^E_{X_{[\pi'', P]}} \left (D+ \frac{\tau^*_{\epsilon}}{\tau^*_{\epsilon}+n} D_{\max} + \epsilon D_{\max} \right ) \leq R^E_{X_{[\pi', P]}}(D) + \delta
\end{align}
$\epsilon$ can be made arbitrarily small, $\tau^*_{\epsilon}$ will depend on $\epsilon$ and $n$ can be made arbitrarily large. It follows that for every $\alpha > 0$, every $\delta > 0$,
$
R^E_{X_{[\pi'', P]}}(D + \alpha ) \leq R^E_{X_{[\pi', P]}}(D) + \delta
$.
By the continuity of $R^E_{X_{[\pi', P]}}(D)$ in $D$, it follows that $\forall \delta > 0$
$
R^E_{X_{[\pi'', P]}}(D) \leq R^E_{X_{[\pi', P]}}(D) + \delta
$. It follows, then, since $\alpha > 0$ can be arbitrarily small, that
$
R^E_{X_{[\pi'', P]}}(D) \leq R^E_{X_{[\pi', P]}}(D)
$.
By interchanging $\pi'$ and $\pi''$, it follows that,
$
R^E_{X_{[\pi', P]}}(D) \leq R^E_{X_{[\pi'', P]}}(D)
$
Thus,
$
R^E_{X_{[\pi', P]}}(D) =  R^E_{X_{[\pi'', P]}}(D)
$.
\end{proof}

\section{Proofs of (\ref{REJBound1}) and (\ref{REJBound2})} \label{App}

To prove (\ref{REJBound1}): 
\begin{proof}
Let $<e^n, f^n>_1^\infty$ be the source code for the  i.i.d. vector $J_{\tau}(X^T_{[\pi, P]})$ source. 

Note that $e^n: (\mathbb X^{T-\tau})^n \rightarrow \{1, 2, \ldots, 2^{\lfloor nR  \rfloor} \}$  and  

$f^n: \{1, 2, \ldots, 2^{\lfloor nR  \rfloor} \} \rightarrow \mathbb (\mathbb Y^{T-\tau})^n$. 

Let $s^n \in \mathbb S^n$ be a realization of $S^n$, the $n$-blocklength vector i.i.d. $X^T_{[\pi, P]}$ source which needs to be coded. 

$
s^n = (s^n(1), s^n(2), \ldots s^n(n))
$
where each $s^n(i) \in \mathbb S$: 

$
s^n(i) = (s^n(i)(1), s^n(i)(2), \ldots, s^n(i)(T))
$.

Recall the projection operator,
$
J_{\tau}(s^n(i)) = (s^n(i)(\tau+1), \ldots, s^n(i)(T))
$.

Denote
$
J^n_{\tau}(s^n) = (J_{\tau}(s^n(1)),J_{\tau}(s^n(2)),\ldots,J_{\tau}(s^n(n)))
$.

Then, $J^n_{\tau}(s^n)$ is an element of $(\mathbb X^{T-\tau})^n$. Denote $f^n(e^n(J^n_{\tau}(s^n))) = t'^n$. 

Note that 
$
t'^n = (t'^n(1), t'^n(2), \ldots, t'^n(n))
$ where

$
t'^n(i) = (t'^n(i)(1), t'^n(i)(2), \ldots, t'^n(i)(T-\tau))
$.

Fix a random $y \in \mathbb Y$. 
Define the extension transformation,

$
E_{\tau}(t'^n(i)) = (y,y, \ldots, y, t'^n(i)(1), t'^n(i)(2), \ldots, t'^n(i)(T-\tau))
$, 
where the initial $y$'s occur $\tau$ times.

Denote 
$
E^n_{\tau}(t'^n) = (E_{\tau}(t'^n(1)), E_{\tau}(t'^n(2)), \ldots, E_{\tau}(t'^n(n)))
$.

Note that $<e^n \circ J^n_{\tau},  E^n_{\tau}\circ f^n>_1^\infty$ is a rate $R$ source code to code the i.i.d. vector $X^T_{[\pi, P]}$ source and that,
$
d'^n(s^n, E^n_{\tau}(f^n(e^n(J^n_{\tau}(s^n))))) \leq d''^n(s'^n, t'^n) + n \tau D_{\max}
$.
(\ref{REJBound1}) follows.
\end{proof}
To prove (\ref{REJBound2}):
\begin{proof}
Let $(A,B)$ be a random vector on $\mathbb A \times \mathbb B$. Let $(A_1, B_1), (A_2, B_2), \ldots$ be a sequence where $(A_i, B_i)$ are independent of each other and $(A_i, B_i) \sim (A, B)$. This sequence is the vector i.i.d $(A, B)$ source. $\mathbb A \times \mathbb B$ is the source space. Let the source reproduction space be $\mathbb A' \times \mathbb B'$. 
$d_1: \mathbb A \times \mathbb A' \rightarrow [0, \infty)$ is a distortion measure.
$d_2: \mathbb B' \times \mathbb B' \rightarrow [0, \infty)$ is a distortion measure. 
Assume that $\mathbb A, \mathbb A', \mathbb B, \mathbb B'$ are finite sets.
Define:
$d_0((a,b), (a',b')) \triangleq d_1(a,a') + d_2(b,b')$. $d_1^n$ and $d_2^n$, $d_0^n$ are respectively defined additively from $d_1$, $d_2$ and $d_0$. We can then define the rate-distortion functions for the i.i.d. $A$ source and the i.i.d. $(A, B)$ source, denoted, respectively, by $R^E_A(\cdot)$ and $R^E_{(A, B)}(\cdot)$.
Then, 
\begin{align} \label{RDAB}
R^E_A(D) \leq R^E_{(A,B)}(D)
\end{align}
(\ref{RDAB}) is proved as follows: Given a code to code the i.i.d. $(A,B)$ source,   think of $B_i$s as a source of common randomness, and use the obvious variant of the same code for coding  the i.i.d. $A$ source. Since the same code is used, (\ref{RDAB}) follows. Existence of a random code with a certain distortion implies the existence of a deterministic code with the same or lesser distortion. From this, (\ref{RDAB}) follows for deterministic codes.
\end{proof}

\end{document}